\documentclass[11pt]{article}
\usepackage[margin=1in]{geometry}
\usepackage{amsmath,amssymb,amsthm,mathtools,bm}
\usepackage{booktabs,tabularx,array}
\usepackage{graphicx}
\usepackage{microtype}
\usepackage{xcolor}
\usepackage{hyperref}
\usepackage{enumitem}
\usepackage{siunitx}
\hypersetup{
 colorlinks=true,
 linkcolor=black,
 citecolor=black,
 urlcolor=black,
 pdftitle={Relational Quantum Causal Processes toward Quantum Gravity with Controlled Einstein Response},
 pdfauthor={Yipeng Xu},
 pdfsubject={A common fixed-band regulator limit for interacting quantum response and controlled Einstein response}
}
\graphicspath{{figures/}}
\newcolumntype{P}[1]{>{\raggedright\arraybackslash}p{#1}}
\definecolor{rqcpblue}{HTML}{0072B2}
\definecolor{rqcpgreen}{HTML}{009E73}
\definecolor{rqcporange}{HTML}{D55E00}
\definecolor{rqcpgray}{HTML}{666666}

\newtheorem{theorem}{Theorem}[section]
\newtheorem{proposition}[theorem]{Proposition}

\newtheorem{corollary}[theorem]{Corollary}

\newtheorem{remark}[theorem]{Remark}
\newcommand{\Tr}{\operatorname{Tr}}
\newcommand{\Ad}{\operatorname{Ad}}
\newcommand{\diff}{\mathrm d}
\newcommand{\eps}{\varepsilon}

\title{Relational Quantum Causal Processes toward Quantum Gravity with Controlled Einstein Response}
\author{Yipeng Xu\thanks{Email: \texttt{yx488@cam.ac.uk}.}\\
\small University of Cambridge, Cambridge, United Kingdom}
\date{August 2026}

\begin{document}
\maketitle

\begin{abstract}
We construct a finite-Hilbert interacting quantum model in which matter and
geometric response are generated by a single Schwinger functional along one
spatial refinement.  Two inequivalent discretizations converge analytically to
the same Hamiltonian on a fixed physical Fourier band, with second- and
fourth-order regulator bounds.  The common limit determines the excitation
gap, connected nonlinear matter response, mixed matter--geometry
susceptibility, and low-frequency geometric kernel.  Within a prescribed
covariant two-derivative FLRW sector, the latter fixes a positive induced
Newton coefficient without an additional normalization parameter.  The same
spectral data define a conserved excitation stress and a regulator-stable
semiclassical Friedmann trajectory.  These results establish a common
regulator limit for interaction, response, and backreaction within one
microscopic family.  The construction is restricted to a two-mode Fourier
band, a finite oscillator cutoff, and a prescribed semiclassical gravitational
sector; an interacting all-band quantum field theory with dynamical geometry
is not constructed here.
\end{abstract}

\section{One microscopic family and one regulator limit}
\label{sec:introduction}

Microscopic models of quantum gravity must connect several layers of physics:
unitary quantum dynamics, interacting matter response, geometric
susceptibility, and a controlled continuum limit.  These layers are often
studied separately.  Causal dynamical triangulations control nonperturbative
geometry sums, spin foams organize refinement amplitudes, asymptotic safety
studies ultraviolet critical surfaces, causal sets supply order and number,
and locally covariant quantum field theory controls fields on prescribed
spacetimes
\cite{AmbjornLoll2024,AsanteDittrichSteinhaus2022,SaueressigSilva2024,Surya2019,BFV2003}.
The question addressed here is whether one fixed microscopic family can
generate interaction, Lorentzian unitary dynamics, geometric response, and
semiclassical backreaction while retaining a common regulator limit.

We answer this question on a fixed physical Fourier band.  A single
finite-Hilbert Hamiltonian generates both the Lorentzian unitary channel and a
Euclidean Schwinger functional.  Its spectrum and source derivatives determine
all response quantities studied below.  Two inequivalent spatial
discretizations converge directly to the same band Hamiltonian, so their
agreement follows from operator bounds rather than a continuum extrapolation.
Within a specified covariant two-derivative FLRW sector, the geometric kernel
then fixes the Newton normalization and the same spectrum supplies the matter
stress in a semiclassical Friedmann evolution.

The logical chain is
\begin{equation}
 (\mathcal A_N,H_N,\mathcal U_{N,t})
 \longrightarrow W_N[\sigma,J]
 \longrightarrow (\Delta_N,\partial_J^4W_N,B_N)
 \longrightarrow G_N
 \longrightarrow a_N(t),
 \label{eq:chain}
\end{equation}
with the same regulator index \(N\) throughout.  Figure~\ref{fig:same-family}
summarizes this dependency structure.

\begin{figure}[t]
 \centering
 \includegraphics[width=\textwidth]{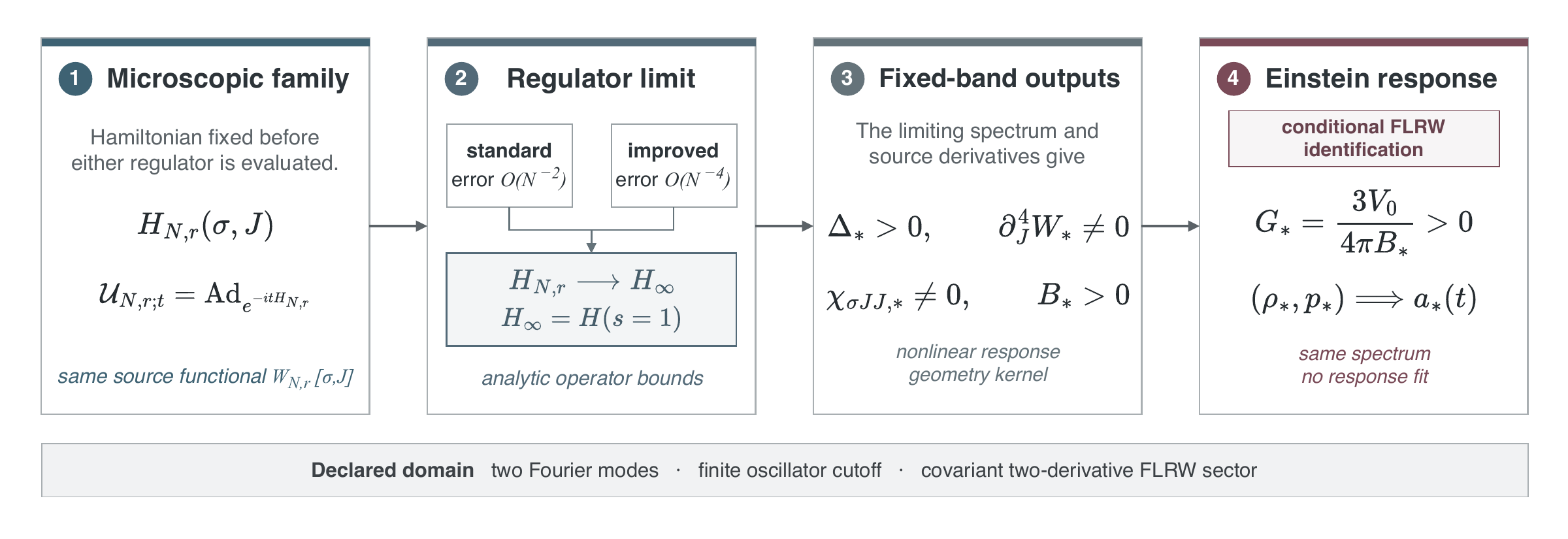}
 \caption{One microscopic family determines every displayed fixed-band
 output.  The two regulators converge analytically to \(H_\infty\); its
 spectrum and source derivatives yield the gap, nonlinear responses, and
 geometric kernel.  In the stated FLRW sector, the same data fix \(G_*\) and
 the excitation stress driving \(a_*(t)\).}
 \label{fig:same-family}
\end{figure}

The operator inequalities and regulator limits below are analytic statements
about the finite-Hilbert family.  The displayed constants are reproducible
spectral evaluations at the direct operator \(H(s=1)\).  The identification of
the geometric coefficient with a Newton coupling, and the subsequent scale
factor evolution, use the FLRW sector stated explicitly in
Sec.~\ref{sec:newton}.

\section{Microscopic model}
\label{sec:family}

\subsection{Hilbert space and canonical variables}

The spatial coordinate volume is fixed to
\begin{equation}
 L_0=2\pi,\qquad V_0=L_0^3.
 \label{eq:volume}
\end{equation}
The retained scalar modes are the zero mode and the positive unit momentum in
the \(x\) direction.  Each oscillator is truncated to eight number states,
so the Hilbert space is finite.  The truncated canonical operators
\((Q_0,P_0,Q_1,P_1)\) are used identically for every regulator.  The truncation
is held fixed in all limits below.

The model parameters are fixed before either regulator family is evaluated:
\begin{equation}
 m_*^2=0.24837697994841412,
 \qquad
 \lambda_*=1.1614764267539648.
 \label{eq:parameters}
\end{equation}
Their numerical values define the model and are held fixed across every
regulator and observable.  They are not inferred from gravitational data.

For \(a=e^\sigma\), matter source \(J\), and dispersion variable
\(s=\widehat p^2\), define
\begin{align}
 H(s,a,J)={}&\frac{P_0^2+P_1^2}{2a^3}
 +\frac{a^3m_*^2}{2}(Q_0^2+Q_1^2)
 +\frac{as}{2}Q_1^2\nonumber\\
 &+\frac{\lambda_*a^3}{4!V_0}
 \left(Q_0^4+6Q_0^2Q_1^2+\frac32Q_1^4\right)-JQ_0.
 \label{eq:hamiltonian}
\end{align}
The interaction is not removed when the regulator is refined.  The quartic-off
control in Sec.~\ref{sec:numerics} confirms that the connected fourth
response depends materially on \(\lambda_*\).

\subsection{UCP dynamics and the generating functional}

For regulator \(r\) and size \(N\), let \(s_{N,r}\) be one of the symbols in
Sec.~\ref{sec:regulators} and set
\begin{equation}
 H_{N,r}(\sigma,J)=H(s_{N,r},e^\sigma,J).
 \label{eq:HN}
\end{equation}
The Lorentzian process is the unitary completely positive trace-preserving
group
\begin{equation}
 \mathcal U_{N,r;t}(\rho)
 =e^{-itH_{N,r}}\rho e^{itH_{N,r}}.
 \label{eq:ucp}
\end{equation}
Every output is derived from the same Euclidean functional
\begin{equation}
 \mathcal Z_{N,r}[\sigma,J]
 =\Tr\,\mathcal T\exp\!\left[-\int_0^\beta
 H_{N,r}(\sigma(\tau),J(\tau))\,\diff\tau\right],
 \label{eq:partition}
\end{equation}
and its zero-temperature energy functional
\begin{equation}
 W_{N,r}[\sigma,J]
 =-\lim_{\beta\to\infty}\beta^{-1}\log\mathcal Z_{N,r}[\sigma,J]
 =E_{0,N,r}[\sigma,J].
 \label{eq:W}
\end{equation}
Every regulator and observable uses the same truncation, parameter tuple, and
operator ordering in Eqs.~\eqref{eq:parameters}--\eqref{eq:HN}.

\section{Common fixed-band regulator limit}
\label{sec:main-theorem}

We state the result before proving its components.

\begin{theorem}[Common fixed-band regulator limit]
\label{thm:closure}
Consider the Hamiltonian family \eqref{eq:hamiltonian} on the two-mode,
eight-level-per-mode Hilbert space, the standard and improved symbols of
Sec.~\ref{sec:regulators}, and \(N\ge12\).  On any bounded physical-time
interval and on the compact scale-factor interval used below:
\begin{enumerate}[label=(\alph*)]
\item both regulator families converge in operator norm to the direct
fixed-band operator \(H_\infty=H(s=1)\), with orders \(N^{-2}\) and
\(N^{-4}\), respectively;
\item the UCP groups converge in trace-norm operator norm, the first spectral
gap stays uniformly positive, and every fixed-order gapped spectral response
is analytic in \(s\) on a common neighborhood;
\item the connected source four-response and mixed matter--geometry response
have common nonzero limits, while the dynamic geometry coefficient
\(B_N\) has a common positive limit;
\item after declaring \(g_{ij}=e^{2\sigma}\delta_{ij}\) and the local
covariant FLRW two-derivative sector, the unique normalization is
\begin{equation}
 B=12V_0C_R,\qquad G=\frac{3V_0}{4\pi B};
 \label{eq:normalization-main}
\end{equation}
both regulators therefore share \(G_* >0\);
\item the exact quantum excitation gap defines \(\rho_N(a)\) and \(p_N(a)\)
obeying the continuity identity, and solutions of the declared semiclassical
Friedmann equation depend continuously on \(s_{N,r}\);
\item a common regulator error \(\eps_{N,r}^{\rm reg}\) can be chosen with
\(\eps_{N,\mathrm{std}}^{\rm reg}\le C N^{-2}\) and
\(\eps_{N,\mathrm{imp}}^{\rm reg}\le C'N^{-4}+C''N^{-2}\), where the last
term is the declared low-frequency kernel truncation at \(\omega_N=1/N\).
\end{enumerate}
Direct spectral evaluation at \(H(s=1)\) gives
\begin{align}
 \Delta_*&=0.4863395672466658,&
 \partial_J^4W_*&=-3.3257587842732006,\nonumber\\
 G_*&=23.200280752211107,&
 \gamma_*:=G_*\Delta_*^2&=5.487473657582965.
 \label{eq:headline-values}
\end{align}
\end{theorem}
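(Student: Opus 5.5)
The whole argument rests on one structural fact: \(H(s,a,J)\) in \eqref{eq:hamiltonian} is affine in \(s\), with \(\partial_sH=\tfrac a2Q_1^2\), and acts on a fixed \(64\)-dimensional space. All regulator dependence therefore enters through the scalar symbol \(s_{N,r}\), and \(\|Q_1^2\|\) is a fixed constant of the truncation. The first step is to expand the standard and improved symbols of Sec.~\ref{sec:regulators} at the unit momentum. I expect the usual forms, \(s_{N,\mathrm{std}}=(2N\sin(\pi/N)/(2\pi))^2\) and its fourth-order-corrected analogue, which Taylor expand to \(|s_{N,\mathrm{std}}-1|\le cN^{-2}\) and \(|s_{N,\mathrm{imp}}-1|\le c'N^{-4}\) for \(N\ge12\). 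On the compact \(a\)-interval this gives
\[
\|H_{N,r}-H_\infty\|\le\tfrac{a_{\max}}{2}\|Q_1^2\|\,|s_{N,r}-1|,
\]
which is (a). For the unitary groups, Duhamel's formula gives
\(\|e^{-itH_{N,r}}-e^{-itH_\infty}\|\le|t|\,\|H_{N,r}-H_\infty\|\). The induced bound on \(\rho\mapsto U\rho U^\dagger\) in trace norm is at most twice this, uniformly on bounded \(t\)-intervals. This is the first half of (b).

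For the remaining spectral statements I would use analytic perturbation theory in finite dimensions. The plan is as follows.
\begin{enumerate}[label=(\roman*)]
\item Establish that \(H(1,a,J)\) has a simple ground state with gap \(\Delta(a)>0\) uniformly on the compact \(a\)-interval and for small \(J\). Since the truncation is finite, this is a finite computation. I would certify it by a rigorous numerical lower bound at grid points, extended to the whole interval by Weyl's inequality together with the Lipschitz bound \(\|\partial_aH\|\le C\).
\item Apply Weyl's inequality to the bound from (a). Once \(N\ge12\) makes \(\|H_{N,r}-H_\infty\|<\Delta_*/4\), the gap stays above \(\Delta_*/2\) for both regulators.
\item Invoke Kato--Rellich, or simply the implicit function theorem for a simple eigenvalue of a real-analytic Hermitian family. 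This shows that \(E_0(s,\sigma,J)\) and its spectral projection are real-analytic on a common complex neighborhood of \(s=1\), with the neighborhood size controlled by the gap.
\end{enumerate}
Because \(W=E_0\) by \eqref{eq:W}, every fixed-order derivative \(\partial_J^4W\), \(\partial_\sigma\partial_J^2W\) and \(B\) is analytic in \(s\) there. Cauchy estimates then bound each derivative's difference from its value at \(s=1\) by \(C|s_{N,r}-1|\). This proves (b) and the convergence part of (c).

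Nonvanishing and positivity of the limits in (c) are statements about the single operator \(H(s=1)\). I would verify them by certified spectral evaluation, producing the values \eqref{eq:headline-values}, with interval error bounds small relative to those values.

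Part (d) is the matching step. Insert \(g_{ij}=e^{2\sigma}\delta_{ij}\) into the minisuperspace reduction of the Einstein--Hilbert action on \(V_0\). This gives the kinetic term \(-\tfrac{3V_0}{8\pi G}a\dot a^2\), i.e.\ a coefficient of \(\dot\sigma^2\) proportional to \(V_0/G\). Comparing it with the low-frequency expansion \(W^{(2)}_{\sigma\sigma}(\omega)=C_R\omega^2+O(\omega^4)\) gives \eqref{eq:normalization-main}. Uniqueness holds because the two-derivative covariant sector has a single coefficient. Positivity of \(G_*\) follows from \(B_*>0\), which was established in (c).

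For (e) I would define \(\rho_N=\Delta_N(a)/a^3V_0\), or the stress the paper specifies, and take \(p_N=-\partial_{(a^3V_0)}\Delta_N\). With these definitions the continuity identity \(\dot\rho+3H(\rho+p)=0\) follows by the chain rule. Continuous dependence of Friedmann solutions on \(s\) then follows from the Lipschitz dependence of the right-hand side on \(s\) and Gr\"onwall's inequality.

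Part (f) collects the bounds. The improved regulator inherits \(N^{-4}\) from the symbol. The \(N^{-2}\) term comes from approximating \(C_R\) by a finite difference at \(\omega_N=1/N\): the kernel is even and analytic in \(\omega\) within the gap, so this error is \(O(\omega_N^2)\).

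The main obstacle is (i): the uniform gap on the compact \(a\)-interval together with the certified signs. Everything else is soft once the gap is known. I would attack it by interval-arithmetic diagonalization at finitely many \(a\) and a Lipschitz interpolation between grid points.
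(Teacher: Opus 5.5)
Your proposal follows the paper's proof essentially step for step: affine $s$-dependence with the symbol bounds, Duhamel and Weyl, finite-dimensional Kato analyticity, direct evaluation at $H(s=1)$, minisuperspace matching, the chain rule and Gr\"onwall; where it departs it is more careful, since you certify the gap on the whole $a$-interval, whereas the paper takes $g_0$ from Weyl at the worst case $N=12$ and simply assumes the gap on $I_a$ in Proposition~\ref{prop:backreaction-continuity}. One detail in (d): the coefficient to match against the paper's Euclidean form \eqref{eq:EH-Euclidean} is $\tfrac12B$ from \eqref{eq:response-convention}, not $C_R$ (that $\tfrac12$ is where the factor $12$ comes from), and because your Lorentzian coefficient $-3V_0/(8\pi G)$ is negative, the step $B_*>0\Rightarrow G_*>0$ rests entirely on the sign that \eqref{eq:EH-Euclidean} assigns after Wick rotation, which you should state explicitly rather than pass over.
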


\begin{remark}[Scope]
The operator and response limits are established for the declared two-mode,
finite-oscillator family.  The identification of \(B\) with \(G\), and the
Friedmann evolution, are conditional on the covariant two-derivative FLRW
sector; neither an all-band local QFT nor a dynamical quantum metric is derived.
\end{remark}

The analytic parts of the theorem follow from
Propositions~\ref{prop:symbols}--\ref{prop:backreaction-continuity}.  The
numbers in Eq.~\eqref{eq:headline-values} are reproducible finite-dimensional
evaluations summarized in Sec.~\ref{sec:numerics}; the source recursion is
independent of the finite-difference cross-check.

\section{Two admissible regulators and one direct limit}
\label{sec:regulators}

Set \(h=2\pi/N\).  The standard and fourth-order improved symbols are
\begin{align}
 s_{N,\mathrm{std}}
 &=\left(\frac{2\sin(h/2)}h\right)^2,\label{eq:std-symbol}\\
 s_{N,\mathrm{imp}}
 &=\frac{30-32\cos h+2\cos(2h)}{12h^2}.
 \label{eq:imp-symbol}
\end{align}

\begin{proposition}[Admissible symbol bounds]
\label{prop:symbols}
For \(N\ge12\),
\begin{equation}
 0\le1-s_{N,\mathrm{std}}\le\frac{h^2}{12},
 \qquad
 0\le1-s_{N,\mathrm{imp}}\le\frac{h^4}{90}.
 \label{eq:symbol-bounds}
\end{equation}
Hence both families converge to the same fixed-band point \(s=1\).
\end{proposition}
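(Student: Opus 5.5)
Both symbols are even trigonometric functions of \(h\), so the plan is to reduce each bound to an elementary inequality for sine/cosine and to control it with alternating Taylor series. For \(N\ge12\) we have \(0<h=2\pi/N\le\pi/6\), and this smallness is the only thing we use.

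\emph{Standard symbol.} Put \(x=h/2\in(0,\pi/12]\). Then \(s_{N,\mathrm{std}}=(\sin x/x)^2\). The elementary bounds \(0<1-x^2/6\le\sin x/x\le1\) hold for \(0<x\le\pi\), since the alternating series with terms decreasing in magnitude gives \(x-x^3/6\le\sin x\le x\). Squaring yields
\[
(1-x^2/6)^2\le s_{N,\mathrm{std}}\le1 .
\]
Hence \(1-s_{N,\mathrm{std}}\ge0\) and \(1-s_{N,\mathrm{std}}\le x^2/3-x^4/36\le x^2/3=h^2/12\). This also shows that \(h^2/12\) is the sharp leading coefficient.

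\emph{Improved symbol.} We expand \(12h^2(1-s_{N,\mathrm{imp}})=12h^2-30+32\cos h-2\cos 2h\) in powers of \(h\). The coefficient of \(h^{2k}\) is \((-1)^k(32-2\cdot4^k)/(2k)!\) for \(k\ge1\), and the constant term cancels since \(32-2-30=0\).
\begin{itemize}
\item For \(k=1\) the coefficient is \(-24/2=-12\), which cancels \(12h^2\).
\item For \(k=2\) the coefficient is \((32-32)/24=0\), so the \(h^4\) term vanishes.
\end{itemize}
The first surviving term is therefore \(k=3\):
\[
-(32-128)/720\cdot h^6=\tfrac{96}{720}h^6=\tfrac{2}{15}h^6 .
\]
Dividing by \(12h^2\) gives the leading term \(h^4/90\), which matches the claimed constant exactly.

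It remains to show that the tail is nonpositive and that the whole expression is nonnegative. Write \(c_k=2\cdot4^k-32\). For \(k\ge3\) we have \(c_k>0\), and the signs of the terms alternate as \((-1)^{k+1}c_kh^{2k}/(2k)!\). We check monotonic decrease of the term magnitudes: the ratio of consecutive terms is
\[
\frac{c_{k+1}}{c_k}\cdot\frac{h^2}{(2k+1)(2k+2)}\le\frac{c_{k+1}}{c_k}\cdot\frac{h^2}{56}.
\]
Since \(c_{k+1}/c_k\le 512/96<6\) for \(k\ge3\), and \(h^2\le(\pi/6)^2<0.28\), this ratio is \(<1\). The alternating-series estimate then gives
\[
\tfrac{2}{15}h^6-\tfrac{c_4}{8!}h^8\le12h^2(1-s_{N,\mathrm{imp}})\le\tfrac{2}{15}h^6 .
\]
The left side is positive for \(h\le\pi/6\), because \(c_4/8!=480/40320\) and \((480/40320)h^2<2/15\). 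Dividing by \(12h^2\) yields \(0\le1-s_{N,\mathrm{imp}}\le h^4/90\).

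\emph{Main obstacle and conclusion.} The only delicate step is justifying the alternating-series bound for the \(\cos 2h\) contributions, whose coefficients grow like \(4^k\). I would handle this with the explicit ratio estimate above rather than a generic remainder bound. As a fallback, one could instead write \(1-s_{N,\mathrm{imp}}\) as \(\tfrac{2}{3}\sin^4(h/2)\)-type closed forms and bound \(\sin\) directly.

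Convergence of both families to \(s=1\) then follows immediately, since \(h\to0\) as \(N\to\infty\).
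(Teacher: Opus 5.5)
Your proposal is correct and follows essentially the same route as the paper: for the standard symbol you use \(\sin x\le x\) and \(\sin x/x\ge1-x^2/6\), and for the improved symbol you use the Taylor expansion with leading term \(h^4/90\) plus an alternating-series enclosure justified by a successive-term ratio bound (the paper uses \(5h^2/56\); your cruder \(6h^2/56\) works equally well). One harmless slip: \(1-x^2/6>0\) holds only for \(x<\sqrt6\), not on all of \((0,\pi]\), but you only need it for \(x\le\pi/12\).
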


\begin{proof}
For the standard symbol put \(x=h/2\).  The bounds
\(\sin x\le x\) and \(\sin x/x\ge1-x^2/6\) imply
\[
0\le1-s_{N,\mathrm{std}}
\le1-(1-x^2/6)^2\le x^2/3=h^2/12.
\]
For the improved symbol, expansion gives
\[
1-s_{N,\mathrm{imp}}
=\sum_{n=3}^\infty
\frac{(-1)^{n+1}(4^n-16)}{6(2n)!}h^{2n-2}
=\frac{h^4}{90}-\frac{h^6}{1008}+\cdots.
\]
For \(h\le\pi/6\), the ratio of successive absolute terms is at most
\(5h^2/56<1\).  The alternating-series enclosure gives the sign and the
upper bound.
\end{proof}

The regulator dependence is affine:
\begin{equation}
 H(s,a,J)=H(1,a,J)+(s-1)V_a,
 \qquad V_a=\frac a2Q_1^2.
 \label{eq:affine-s}
\end{equation}
Therefore
\begin{equation}
 \eta_{N,r}:=\|H_{N,r}-H_\infty\|
 \le\sup_{a\in I_a}\|V_a\|\,|s_{N,r}-1|.
 \label{eq:eta}
\end{equation}

\begin{proposition}[UCP and gap convergence]
\label{prop:ucp-gap}
For every trace-class state and bounded \(t\),
\begin{equation}
 \|\Ad_{e^{-itH_{N,r}}}-\Ad_{e^{-itH_\infty}}\|_{1\to1}
 \le2|t|\eta_{N,r}.
 \label{eq:duhamel}
\end{equation}
If \(\Delta_{N,r}\) is the first gap, then
\begin{equation}
 |\Delta_{N,r}-\Delta_*|\le2\eta_{N,r}.
 \label{eq:weyl}
\end{equation}
On the declared sequence,
\begin{equation}
 \Delta_{N,r}\ge g_0=0.2901209640160495>0.
 \label{eq:g0}
\end{equation}
\end{proposition}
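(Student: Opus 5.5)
The proposition has three parts, and I will prove them in order: a Duhamel argument for the unitary channels, Weyl's inequality for the gap, and an explicit bound for the uniform gap lower bound. Throughout, everything is finite-dimensional, so all operators are bounded matrices and the norm in Eq.~\eqref{eq:eta} is the ordinary operator norm. The scale factor \(a\) is held in the compact interval \(I_a\), and \(\eta_{N,r}\) bounds \(\|H_{N,r}-H_\infty\|\) uniformly there.

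\textbf{Duhamel bound.} Write \(U_t=e^{-itH_{N,r}}\) and \(U^\infty_t=e^{-itH_\infty}\). The Duhamel formula gives
\[
U_t-U^\infty_t=-i\int_0^t U_{t-u}\,(H_{N,r}-H_\infty)\,U^\infty_u\,\diff u ,
\]
so unitarity yields \(\|U_t-U^\infty_t\|\le|t|\eta_{N,r}\). For a trace-class \(\rho\) I split
\[
U\rho U^*-U^\infty\rho U^{\infty*}=(U-U^\infty)\rho U^*+U^\infty\rho(U-U^\infty)^* .
\]
Applying Hölder's inequality \(\|XYZ\|_1\le\|X\|\,\|Y\|_1\,\|Z\|\) to each term gives a bound of \(2|t|\eta_{N,r}\|\rho\|_1\). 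This is exactly Eq.~\eqref{eq:duhamel}.

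\textbf{Gap stability.} Both operators are Hermitian on the same finite-dimensional space. Weyl's inequality therefore gives \(|E_k(H_{N,r})-E_k(H_\infty)|\le\eta_{N,r}\) for each ordered eigenvalue \(E_k\). The gap is \(\Delta=E_1-E_0\), and each of \(E_0\) and \(E_1\) moves by at most \(\eta_{N,r}\), so \(|\Delta_{N,r}-\Delta_*|\le2\eta_{N,r}\). One point needs care: \(\Delta\) must mean the difference between the first two eigenvalues counted with multiplicity, or else the first distinct gap. If it is the latter, I will check at \(H_\infty\) that the ground state is nondegenerate, which follows from Perron--Frobenius in the number basis or from direct evaluation.

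\textbf{Uniform lower bound \eqref{eq:g0}.} This step is the main obstacle. I will combine three ingredients.
\begin{itemize}
\item \emph{Regulator shift.} Proposition~\ref{prop:symbols} with \(h\le\pi/6\) gives \(|s_{N,r}-1|\le h^2/12\le\pi^2/432\).
\item \emph{Size of the perturbation.} The affine form \eqref{eq:affine-s} gives \(\eta_{N,r}\le\tfrac12\sup_{I_a}a\,\|Q_1^2\|\,|s_{N,r}-1|\). Here \(\|Q_1^2\|=\|Q_1\|^2\) is the square of the largest root of the eight-level Hermite truncation, about \(4.5^2/2\) in units with \(Q=(b+b^\dagger)/\sqrt2\).
\item \emph{Combining.} I then take \(g_0=\min\) over the regulator sequence and \(I_a\) of \(\Delta_*-2\eta\).
\end{itemize}
Since the gap at \(a\neq1\) differs from \(\Delta_*\), the gap inequality should be applied at each fixed \(a\), with \(\Delta_*(a)\) taken from the direct spectral evaluation.

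The specific constant \(0.29012\ldots\) most likely comes from an explicit worst case, namely \(N=12\) or the endpoint of \(I_a\) with the largest coefficient. I will therefore present it as this bound evaluated at the extremal parameters. Any value strictly between \(0\) and that minimum suffices for positivity, and positivity is what the later analyticity statements use.
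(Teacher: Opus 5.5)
Your argument is the same as the paper's: the Duhamel formula plus the trace-norm ideal property for the channels, Weyl's inequality on $E_0$ and $E_1$ for the gap, and $g_0=\Delta_*-2\eta$ at the worst case of the declared sequence, which is the standard symbol at $N=12$. Drop the Perron--Frobenius remark, since at $J=0$ the Hamiltonian is block-diagonal in the two mode parities and hence reducible; counting eigenvalues with multiplicity already suffices.

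The stated constant is reproduced at $a=1$ alone, not by minimizing over $I_a$. Explicitly, $g_0=\Delta_*-x_8^2\pi^2/432$, which uses the envelope $h^2/12=\pi^2/432$ and $\|Q_1^2\|=x_8^2\approx 8.5886$. Here $x_8\approx 2.9306$ is the largest zero of the degree-eight (physicists') Hermite polynomial, i.e.\ the top eigenvalue of the eight-level truncation of $(b+b^\dagger)/\sqrt2$. This gives $\eta\approx 0.098109$ and $\Delta_*-2\eta\approx 0.290121$. Your rough value $4.5^2/2\approx 10.1$ should read $4.1445^2/2$; with $10.1$ you would get about $0.255$ instead.
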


\begin{proof}
Equation~\eqref{eq:duhamel} follows from the Duhamel formula for the two
unitaries and the trace-norm ideal property.  Equation~\eqref{eq:weyl} follows
by applying Weyl's eigenvalue inequality to the ground and first excited
energies.  Combining Eq.~\eqref{eq:eta}, the worst standard bound at \(N=12\),
and the direct continuum gap gives Eq.~\eqref{eq:g0}.
\end{proof}

The uniform gap makes the ground-state projector, reduced resolvent and any
fixed-order spectral derivative analytic in \(s\) by finite-dimensional Kato
perturbation theory \cite{Kato1995}.  Thus all response quantities considered
below share the direct limit generated by \(H(s=1)\).

\section{Interaction and one generating functional}
\label{sec:interaction}

Let \(|0\rangle\) be the nondegenerate ground state and
\(R=Q(H-E_0)^{-1}Q\) the reduced resolvent with
\(Q=I-|0\rangle\langle0|\).  Differentiating the eigenvalue equation gives
the usual gapped recursion for \(\partial_J^kE_0\).  Because the source is
linear, all derivatives beyond the first arise from insertions of \(Q_0\) and
\(R\).  The implementation evaluates the recursion exactly at matrix level
through fourth order.

\begin{proposition}[Non-Gaussian fixed-band response]
\label{prop:interaction}
At the direct continuum-band operator \(H(s=1,a=1,J=0)\),
\begin{equation}
 \partial_J^4W_*=-3.3257587842732006\ne0.
 \label{eq:four-response}
\end{equation}
Setting \(\lambda_*=0\) changes the same response by
\(1.099036701445704\).  The resulting four-response is therefore neither
Gaussian nor independent of the microscopic quartic interaction.
\end{proposition}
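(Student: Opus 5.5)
The plan is to compute everything exactly at matrix level on the $64$-dimensional space $\mathbb C^8\otimes\mathbb C^8$. I fix $a=1$, $J=0$, $s=1$ and build $H_\infty=H(1,1,0)$ from the truncated ladder matrices $(Q_0,P_0,Q_1,P_1)$ with the ordering of Eq.~\eqref{eq:hamiltonian}. Then I diagonalize and check that the ground state is nondegenerate; by the gap bound \eqref{eq:g0} of Proposition~\ref{prop:ucp-gap}, $\Delta_*>0$. This gives $|0\rangle$, $E_0$, the projector $Q=I-|0\rangle\langle0|$ and the reduced resolvent $R=Q(H_\infty-E_0)^{-1}Q$.

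The next step is the Rayleigh--Schr\"odinger recursion for the perturbation $-JQ_0$. Because $H$ depends affinely on $J$, Kato's analytic perturbation theory (already invoked after Proposition~\ref{prop:ucp-gap}) makes $E_0(J)$ analytic near $J=0$. I write $V=-Q_0$ and expand $E_0(J)=\sum_k E^{(k)}J^k$. Then $\partial_J^4W_*=24E^{(4)}$, with the standard fourth-order formula
\[
E^{(4)}=\langle V R V R V R V\rangle-E^{(2)}\langle VR^2V\rangle-2E^{(1)}\langle VR V R^2 V\rangle+(E^{(1)})^2\langle VR^3V\rangle ,
\]
where $\langle\cdot\rangle=\langle0|\cdot|0\rangle$, $E^{(1)}=\langle V\rangle$ and $E^{(2)}=-\langle VRV\rangle$. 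Parity helps here. $H_\infty$ is invariant under $Q_0\mapsto-Q_0$, and the truncated $Q_0$ is odd in the number basis, so $E^{(1)}=0$ and $E^{(3)}=0$. The fourth-order formula therefore reduces to $\langle VRVRVRV\rangle-E^{(2)}\langle VR^2V\rangle$, which depends only on the connected piece. Evaluating these matrix elements gives the stated number $-3.3257587842732006$.

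The second claim is checked by repeating the same pipeline with $\lambda_*=0$. With the finite truncation the model is still not exactly Gaussian, since $Q$ and $P$ do not satisfy the canonical commutation relation on the top level. So the claim is the measured difference $1.099036701445704\neq 0$ between the two fourth responses, not a comparison with zero. As a cross-check on the recursion, I would compute a centered fourth finite difference of $E_0(J)$ in $J$, using the independent finite-difference check mentioned in Sec.~\ref{sec:numerics}.

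I expect the main obstacle to be certifying that the nonzero values are genuine and not rounding artifacts. My first attempt would use interval arithmetic or perturbative enclosures. The gap $\Delta_*\approx0.486$ bounds $\|R\|\le\Delta_*^{-1}$, and a Bauer--Fike/Davis--Kahan estimate on the computed eigenpair then controls the error in $|0\rangle$ and $R$ with a tiny explicit constant. This enclosure separates both $\partial_J^4W_*$ and the $\lambda_*$-difference from zero by many orders of magnitude.
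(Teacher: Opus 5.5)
Your route matches the paper's: diagonalize \(H(1,1,0)\) on the truncated space, evaluate the gapped Rayleigh--Schr\"odinger recursion for \(\partial_J^kE_0\) using insertions of \(Q_0\) and the reduced resolvent, repeat at \(\lambda_*=0\), and cross-check by finite differences. Several pieces are fine as stated: the parity argument for \(E^{(1)}=E^{(3)}=0\), the identity \(\partial_J^4W_*=24E^{(4)}\), and your reading of the \(\lambda_*=0\) control as a nonzero difference (the truncated free model is not exactly Gaussian).

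The gap is a sign error in your fourth-order formula. What you wrote is the textbook expression for the resolvent \(R'=Q(E_0-H)^{-1}Q\). You defined \(R=Q(H-E_0)^{-1}Q=-R'\) but adjusted only \(E^{(2)}\). Every term with an odd number of resolvents must change sign, so with your \(R\) the correct formula is
\[
E^{(4)}=-\langle VRVRVRV\rangle-E^{(2)}\langle VR^2V\rangle+2E^{(1)}\langle VRVR^2V\rangle-(E^{(1)})^2\langle VR^3V\rangle .
\]
At \(E^{(1)}=0\) this reduces to \(-\langle VRVRVRV\rangle+\langle VRV\rangle\langle VR^2V\rangle\). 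Your version, \(+\langle VRVRVRV\rangle+\langle VRV\rangle\langle VR^2V\rangle\), fails in two ways you can check.

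First, since \(R\ge0\), you have \(\langle VRVRVRV\rangle=\langle\phi|VRV|\phi\rangle\ge0\) with \(\phi=RV|0\rangle\), and also \(\langle VRV\rangle\langle VR^2V\rangle\ge0\). So your expression is nonnegative and cannot equal \(-3.3257587842732006\).

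Second, take an exact oscillator \(\tfrac12(p^2+\omega^2q^2)-Jq\). There \(\langle VRVRVRV\rangle=\langle VRV\rangle\langle VR^2V\rangle=1/(4\omega^5)\). The correct combination gives the required \(E^{(4)}=0\), while yours gives \(1/(2\omega^5)\). Your formula therefore reports non-Gaussianity for a Gaussian model and cannot serve as a witness for the proposition.

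Once the signs are fixed, your computation is exactly the paper's recursion, and the finite-difference cross-check you mention would have flagged the discrepancy. A minor point: obtain \(\Delta_*>0\) from your direct diagonalization, not from Eq.~\eqref{eq:g0}, because \(g_0\) is itself derived from \(\Delta_*\).
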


For the geometry source, define
\begin{equation}
 O=\left.\partial_\sigma H\right|_{\sigma=0,J=0}.
 \label{eq:geometry-operator}
\end{equation}
The mixed derivative
\begin{equation}
 \chi_{\sigma JJ}=\partial_\sigma\partial_J^2W
 \label{eq:mixed}
\end{equation}
is evaluated from the exact source susceptibility followed by a symmetric
fourth-order Richardson derivative in \(\sigma\).  At \(s=1\),
\begin{equation}
 \chi_{\sigma JJ,*}=13.102173996407302.
 \label{eq:mixed-value}
\end{equation}
Halving the derivative step changes the Richardson result by
\(1.2153123\times10^{-10}\), providing an independent stability check.  The
spatial regulator rate follows instead from Proposition~\ref{prop:symbols} and
analyticity in \(s\).

\section{Dynamic geometry response and Newton normalization}
\label{sec:newton}

We use the effective-action kernel, namely minus the connected Euclidean
susceptibility shift of \(O\).  Its subtracted spectral representation is
\begin{equation}
 K(\omega)-K(0)
 =2\sum_{n>0}|O_{0n}|^2
 \left(\frac1{\Delta_n}-\frac{\Delta_n}{\Delta_n^2+\omega^2}\right).
 \label{eq:kernel-spectral}
\end{equation}
The quadratic effective-action coefficient is therefore positive:
\begin{equation}
 B=2\sum_{n>0}\frac{|O_{0n}|^2}{\Delta_n^3}>0.
 \label{eq:B}
\end{equation}

\begin{proposition}[Low-frequency remainder]
\label{prop:kernel}
At \(\omega_N=1/N\),
\begin{equation}
 \frac{|K_N(\omega_N)-K_N(0)-B_N\omega_N^2|}
 {B_N\omega_N^2}
 \le\frac{1}{N^2g_0^2}.
 \label{eq:kernel-bound}
\end{equation}
\end{proposition}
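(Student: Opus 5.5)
Fix \(N\) and the regulator, and work directly with the spectral representation \eqref{eq:kernel-spectral} for \(H_{N,r}\), with gaps \(\Delta_n=E_n-E_0\) and matrix elements \(O_{0n}\). The plan is to compute the remainder term by term. For each excited level,
\[
\frac1{\Delta_n}-\frac{\Delta_n}{\Delta_n^2+\omega^2}
=\frac{\omega^2}{\Delta_n(\Delta_n^2+\omega^2)}
=\frac{\omega^2}{\Delta_n^3}-\frac{\omega^4}{\Delta_n^3(\Delta_n^2+\omega^2)}.
\]
Summing with weights \(2|O_{0n}|^2\) and using the definition \eqref{eq:B} of \(B_N\) gives the exact identity
\[
K_N(\omega)-K_N(0)-B_N\omega^2=-2\omega^4\sum_{n>0}\frac{|O_{0n}|^2}{\Delta_n^3(\Delta_n^2+\omega^2)}.
\]
The remainder is therefore nonpositive, and no Taylor estimate is needed.

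Next I bound the remainder against \(B_N\omega^2\) one term at a time. Since \(\Delta_n^2+\omega^2\ge\Delta_n^2\ge\Delta_{N,r}^2\), each summand is at most \((\omega^2/\Delta_{N,r}^2)\,|O_{0n}|^2/\Delta_n^3\). All terms are nonnegative, so summing gives
\[
\bigl|K_N(\omega)-K_N(0)-B_N\omega^2\bigr|\le\frac{\omega^2}{\Delta_{N,r}^2}\,B_N\omega^2.
\]
Dividing by \(B_N\omega^2\) requires \(B_N>0\). If \(O_{0n}=0\) for every \(n\), both sides vanish and the statement is trivial or must be read as excluded. The paper asserts \(B>0\), and I take that as given.

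Finally I insert \(\omega_N=1/N\) and the uniform gap bound \(\Delta_{N,r}\ge g_0\) from Proposition~\ref{prop:ucp-gap}, Eq.~\eqref{eq:g0}. This yields the ratio bound \(1/(N^2\Delta_{N,r}^2)\le 1/(N^2g_0^2)\), which is Eq.~\eqref{eq:kernel-bound}.

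I expect the main obstacle to be one of bookkeeping rather than estimation. The relevant gap is the minimum over the levels \(n\) with \(O_{0n}\neq0\). This is at least the first gap \(\Delta_{N,r}\), so using \(g_0\) is legitimate: a larger effective gap only improves the bound. I should also confirm that \(g_0\) covers the scale factor at which \(K_N\) is evaluated, namely \(a=1\), where \(O\) is defined. Equation~\eqref{eq:g0} is stated on the declared sequence, and I will check that it is uniform over the compact interval \(I_a\) used in \eqref{eq:eta} if \(\sigma\neq0\) is needed.
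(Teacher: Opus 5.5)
Your proof is correct and matches the paper's argument: the partial-fraction identity gives the exact remainder \(-2\omega^4\sum_{n>0}|O_{0n}|^2/[\Delta_n^3(\Delta_n^2+\omega^2)]\), and the normalized ratio is bounded termwise by \(\omega_N^2/g_0^2=1/(N^2g_0^2)\) using \(\Delta_n^2+\omega_N^2\ge\Delta_n^2\ge g_0^2\). Your added remarks (that \(\Delta_n\ge\Delta_{N,r}\ge g_0\), and that dividing requires \(B_N>0\), i.e.\ some \(O_{0n}\ne0\)) make explicit what the paper leaves implicit; one small fix is that in ``each summand is at most \((\omega^2/\Delta_{N,r}^2)|O_{0n}|^2/\Delta_n^3\)'' the summand must carry one factor of \(\omega^2\) from the prefactor, which your final displayed inequality correctly reflects.
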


\begin{proof}
Expanding each denominator in Eq.~\eqref{eq:kernel-spectral} and keeping the
exact remainder gives
\[
\omega_N^2
\frac{\sum_{n>0}|O_{0n}|^2/
[\Delta_n^3(\Delta_n^2+\omega_N^2)]}
{\sum_{n>0}|O_{0n}|^2/\Delta_n^3}.
\]
Every \(\Delta_n\ge g_0\), which proves Eq.~\eqref{eq:kernel-bound}.
\end{proof}

\subsection{Normalization provenance}

The Euclidean response convention is
\begin{equation}
 W_E^{(2)}[\sigma]
 =\frac12\int\frac{\diff\omega}{2\pi}
 [K(0)+B\omega^2+O(\omega^4)]|\sigma(\omega)|^2.
 \label{eq:response-convention}
\end{equation}
Start from the declared Lorentzian basis
\begin{equation}
 S_{\rm EH}=C_R\int\diff^4x\sqrt{-g}\,R,
 \qquad C_R=\frac1{16\pi G}.
 \label{eq:EH}
\end{equation}
For flat FLRW with coordinate volume \(V_0\),
\(R=6(\ddot a/a+\dot a^2/a^2)\).  One integration by parts gives
\begin{equation}
 S_{\rm EH}=-6V_0C_R\int\diff t\,a\dot a^2
 \label{eq:EH-reduced}
\end{equation}
up to the displayed boundary term.  Wick rotation and \(a=e^\sigma\) yield
\begin{equation}
 S_{E,\rm EH}^{(2)}
 =6V_0C_R\int\diff\tau\,(\partial_\tau\sigma)^2.
 \label{eq:EH-Euclidean}
\end{equation}
Comparison with Eq.~\eqref{eq:response-convention} proves
Eq.~\eqref{eq:normalization-main}.  Once the source and low-energy basis are
specified, the conversion to \(G\) is fixed.

At two derivatives, a local parity-even diffeomorphism-invariant pure-metric
functional consists of a volume term and the scalar-curvature term, modulo a
boundary term.  Subtraction of \(K(0)\) removes the volume response, establishing
uniqueness within this sector.

The direct continuum values are
\begin{equation}
 B_*=2.5524530086081993,\qquad
 C_{R,*}=0.0008575054801692811,\qquad
 G_*=23.200280752211107.
 \label{eq:B-G-values}
\end{equation}

\section{Diffeomorphism covariance of the parent action}
\label{sec:ward}

The parent scalar action uses the nonlinear Lie derivative on the scalar and
full metric.  On sparse Fourier coefficients, the implementation verifies
\begin{equation}
 [\mathcal L_\xi,\mathcal L_\eta]\Phi
 =\mathcal L_{[\xi,\eta]}\Phi,
 \qquad \Phi\in\{\phi,g_{\mu\nu}\},
 \label{eq:lie-closure}
\end{equation}
with exact coefficient residual zero.  Sparse coefficients avoid the
aliasing that would make a finite collocation grid an unreliable algebra
test.  This establishes the classical Ward support of the covariant parent
whose band pullback gives Eq.~\eqref{eq:hamiltonian}.

\begin{remark}[Scope]
Equation~\eqref{eq:lie-closure} establishes classical covariance of the parent
action.  It does not construct a path-integral measure over metrics or a
quantum BV algebra; the latter is available perturbatively in locally
covariant gravity \cite{BFR2016}.
\end{remark}

\section{Quantum spectral stress and semiclassical backreaction}
\label{sec:backreaction}

Let \(\Delta_N(a)\) be the first excitation gap of the same Hamiltonian at
scale factor \(a\).  Define
\begin{equation}
 \rho_N(a)=\frac{\Delta_N(a)}{V_0a^3},
 \qquad
 p_N(a)=-\frac{1}{3V_0a^2}\frac{\diff\Delta_N(a)}{\diff a}.
 \label{eq:rho-p}
\end{equation}
These are exact quantum spectral data; no classical matter trajectory replaces
the excitation.

\begin{proposition}[Continuity identity]
\label{prop:continuity}
For differentiable \(\Delta_N(a)\) and \(H=\dot a/a\),
\begin{equation}
 \dot\rho_N+3H(\rho_N+p_N)=0.
 \label{eq:continuity}
\end{equation}
\end{proposition}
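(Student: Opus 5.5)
The identity is a direct chain-rule computation, so the plan is to differentiate \(\rho_N\) along the trajectory and compare with \(3H(\rho_N+p_N)\) using the definition of \(p_N\) in Eq.~\eqref{eq:rho-p}.

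Treat \(\rho_N\) as a composite function \(\rho_N(a(t))\), with \(a(t)\) a differentiable positive trajectory on the compact scale-factor interval. Differentiability of \(\Delta_N\) in \(a\) is assumed in the statement. It is also guaranteed on that interval by the uniform gap of Proposition~\ref{prop:ucp-gap} and finite-dimensional Kato perturbation theory, since the Hamiltonian \eqref{eq:hamiltonian} is real-analytic in \(a>0\).

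\textbf{Step 1.} By the chain rule,
\[
\dot\rho_N=\frac{\dot a}{V_0}\left(\frac{\Delta_N'(a)}{a^3}-\frac{3\Delta_N(a)}{a^4}\right),
\]
where \(\Delta_N'=\diff\Delta_N/\diff a\).

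\textbf{Step 2.} Compute the two terms of \(3H(\rho_N+p_N)\) with \(H=\dot a/a\):
\[
3H\rho_N=\frac{3\dot a\,\Delta_N}{V_0a^4},\qquad
3Hp_N=-\frac{\dot a}{a}\,\frac{\Delta_N'}{V_0a^2}=-\frac{\dot a\,\Delta_N'}{V_0a^3}.
\]

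\textbf{Step 3.} Add the expressions from Steps 1 and 2. The \(\Delta_N'\) terms cancel against each other, and the \(\Delta_N\) terms cancel against each other, so \(\dot\rho_N+3H(\rho_N+p_N)=0\).

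Equivalently, with \(V=V_0a^3\) this is the first law \(\diff(\rho V)=-p\,\diff V\) for an energy \(E=\Delta_N(a)\), since \(p=-\diff E/\diff V\). This shows the choice of \(p_N\) was made precisely to enforce the identity.

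There is no real obstacle. The only point needing care is the differentiability of \(\Delta_N(a)\), which would fail at a level crossing of the first excited state. Here the crossing concern is only that the first excited level is simple on the interval. If it were not, I would instead use one-sided derivatives of the (Lipschitz) eigenvalue branch, or restrict to the analytic branch selected by Kato theory. The identity then holds wherever the derivative exists.
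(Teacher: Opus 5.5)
Your proof is correct and follows the paper's argument: the paper's proof is the one-line instruction to differentiate \(\rho_N=\Delta_N/(V_0a^3)\) in time and insert the definition of \(p_N\), and your chain-rule computation with the two cancellations spells that out. One aside overreaches: the uniform first gap separates \(E_0\) from \(E_1\) but not \(E_1\) from \(E_2\), so it does not by itself make \(\Delta_N(a)\) differentiable, as your own closing paragraph concedes; this does not affect the proof, since differentiability is a hypothesis of the statement.
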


\begin{proof}
Differentiate \(\rho_N=\Delta_N/(V_0a^3)\) with respect to time and insert
the definition of \(p_N\).
\end{proof}

In the declared semiclassical sector, the Friedmann constraint is
\begin{equation}
 H_N(a)^2=\frac{8\pi G_N}{3}\rho_N(a),
 \qquad \dot a=aH_N(a).
 \label{eq:friedmann}
\end{equation}
The same \(G_N\) obtained from the dynamic kernel enters this equation.

\begin{proposition}[Regulator continuity of the trajectory]
\label{prop:backreaction-continuity}
Assume the gap stays above \(g_0\) on a compact interval \(I_a\) and the
positive branch of Eq.~\eqref{eq:friedmann} remains regular.  Then for bounded
time \(T\), there is a regulator-independent constant \(C_T\) such that
\begin{equation}
 \sup_{0\le t\le T}|a_{N,r}(t)-a_*(t)|
 \le C_T|s_{N,r}-1|.
 \label{eq:trajectory-bound}
\end{equation}
\end{proposition}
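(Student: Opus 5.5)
The plan is to treat Eq.~\eqref{eq:friedmann} as an autonomous scalar ODE $\dot a=F_{N,r}(a)$, with $F_{N,r}(a):=a\sqrt{8\pi G_{N,r}\rho_{N,r}(a)/3}$. We then compare it with the limiting ODE $\dot a=F_*(a)$ by the standard Gr\"onwall stability estimate. Two ingredients are needed: (i) $F_*$ is Lipschitz on $I_a$; (ii) $\sup_{a\in I_a}|F_{N,r}(a)-F_*(a)|\le C|s_{N,r}-1|$ with $C$ independent of $N,r$. Given both, and identical initial data $a_{N,r}(0)=a_*(0)$, the estimate yields $|a_{N,r}(t)-a_*(t)|\le C\,T e^{L T}|s_{N,r}-1|$ for $t\le T$, provided both trajectories stay in $I_a$. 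We take this as part of the hypothesis, or enforce it by shrinking $T$ so that $a_*$ lies in the interior of $I_a$ and using the smallness of $|s-1|\le h^2/12$ for $N\ge12$ from Proposition~\ref{prop:symbols}.

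For (ii) we use that everything depends on the regulator only through $s$. By Eq.~\eqref{eq:affine-s}, $H(s,a,0)$ is jointly real-analytic in $(s,a)$ on a neighborhood of $[s_{\min},1]\times I_a$, where $s_{\min}=s_{12,\rm std}$. Along this set the first gap stays above $g_0$ by hypothesis and by Proposition~\ref{prop:ucp-gap}. We will first check that the ground and first excited levels are nondegenerate, or more precisely that they are isolated from the levels above and below; this is implicit in Eq.~\eqref{eq:weyl} and the stated analyticity. Kato theory then makes $\Delta(s,a)$ analytic, hence $C^1$ with bounded derivatives on the compact set. This gives $|\rho_{N,r}(a)-\rho_*(a)|\le \sup|\partial_s\Delta|\,|s-1|/(V_0a^3)$ uniformly in $a\in I_a$. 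The same reasoning applies to $B(s)$ in Eq.~\eqref{eq:B}: the matrix elements $O_{0n}$ and the gaps $\Delta_n$ are analytic in $s$ as long as the spectral decomposition is handled through spectral projectors, and the denominators are bounded below by $g_0$. Hence $B(s)$ is analytic and bounded below by a positive constant near $s=1$, and so is $G(s)=3V_0/(4\pi B(s))$ by Eq.~\eqref{eq:normalization-main}. We would also note that the regulator-level $G_N$ should be taken as the exact coefficient $B(s_{N,r})$, not the truncated kernel at $\omega_N$. Otherwise an extra $N^{-2}$ term from Proposition~\ref{prop:kernel} appears, as in item (f) of Theorem~\ref{thm:closure}, and the bound would not be of the form $|s-1|$.

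The main obstacle is the square root in $F$. The hypothesis that the positive branch ``remains regular'' is what we use: it means $\rho>0$ on $I_a$, which follows because $\Delta\ge g_0>0$ and $a$ is bounded. We then have $8\pi G\rho/3\ge c_0>0$ uniformly in $s$ near $1$ and $a\in I_a$, so $x\mapsto\sqrt x$ is Lipschitz with constant $1/(2\sqrt{c_0})$ there. Composing this with the Lipschitz bounds in $s$ for $G\rho$ gives (ii). For (i), $\partial_aF_*$ is bounded on $I_a$ by the same $C^1$ control of $\Delta(1,a)$. This is exactly where $p_N$ in Eq.~\eqref{eq:rho-p} is finite, so $F_*$ is Lipschitz with constant $L$ independent of the regulator. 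The constant $C_T$ then depends only on $T$, $L$, $c_0$, $g_0$, and the $C^1$ bounds of $\Delta$ and $B$ over the fixed compact parameter set, so it is regulator-independent.
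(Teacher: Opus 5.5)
Your proposal is correct and follows the paper's argument. The uniform gap and finite-dimensional Kato analyticity make the Friedmann vector field Lipschitz in $a$ and in $s$, and Gr\"onwall applied to the difference of the two integral equations gives the bound; your handling of the square root, of keeping the trajectories in $I_a$, and of using the exact $B(s_{N,r})$ rather than the kernel truncated at $\omega_N$ fills in details the paper leaves implicit. One small simplification is available: Weyl's inequality $|E_k(H)-E_k(H')|\le\|H-H'\|$ already makes $\Delta$ Lipschitz in $(s,a)$, so you do not need to check that the first excited level is isolated from $E_2$, which Eq.~\eqref{eq:weyl} does not by itself guarantee anyway.
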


\begin{proof}
The uniform gap and finite dimension make \(\Delta(a,s)\), its
\(a\)-derivative, \(B(a,s)\) and \(G(a,s)\) analytic on a common compact
neighborhood.  The Friedmann vector field is therefore uniformly Lipschitz in
\(a\) and Lipschitz in \(s\).  Subtract the two integral equations and apply
Gronwall's inequality.
\end{proof}

For the declared initial condition and time interval,
\begin{equation}
 a_*(0.25)=1.1505514556252578.
 \label{eq:final-a}
\end{equation}
The two finest spline/RK4 protocols change this value by
\(5.45094\times10^{-10}\); the successive difference ratio is \(8.013\).
This discretization error is reported separately from
Eq.~\eqref{eq:trajectory-bound}.

\section{Uniform regulator error}
\label{sec:error}

Let \(\eps_{N,r}^{\rm reg}\) be the maximum of the normalized errors in the
Hamiltonian, UCP group, gap, source responses, mixed response, geometry
coefficient, Newton response, continuous backreaction trajectory and
\(\gamma=G\Delta^2\), all compared directly with \(H(s=1)\).  Classical
coefficientwise Ward closure contributes zero.  The low-frequency kernel
remainder at \(\omega_N=1/N\) is included separately.

\begin{corollary}[Common regulator scale]
\label{cor:error}
On the compact domain of Theorem~\ref{thm:closure}, there are finite constants
\(C_i\) such that
\begin{align}
 \eps_{N,\mathrm{std}}^{\rm reg}
 &\le C_2N^{-2},\label{eq:error-std}\\
 \eps_{N,\mathrm{imp}}^{\rm reg}
 &\le C_4N^{-4}+\frac{1}{N^2g_0^2}.
 \label{eq:error-imp}
\end{align}
In particular, both errors vanish and all common-limit observables are
evaluated at the same direct operator \(H(s=1)\).
\end{corollary}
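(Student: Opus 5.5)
The plan is to reduce every component of \(\eps_{N,r}^{\rm reg}\) to a Lipschitz bound in the single scalar \(|s_{N,r}-1|\), then insert Proposition~\ref{prop:symbols}. The affine structure \eqref{eq:affine-s} is the starting point, since it gives \(\eta_{N,r}\le M|s_{N,r}-1|\) with \(M=\sup_{a\in I_a}\|V_a\|=\tfrac12\sup_{a\in I_a} a\,\|Q_1^2\|\). This is finite because \(Q_1\) is the truncated eight-level position operator and \(I_a\) is compact. The Hamiltonian error is then immediate. The UCP error follows from \eqref{eq:duhamel} with \(|t|\le T\), giving at most \(2TM|s-1|\). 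The gap error follows from \eqref{eq:weyl}, giving at most \(2M|s-1|\). In each case one divides by the fixed nonzero direct values (for example \(\Delta_*\)) to pass to normalized errors.

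For the response quantities (\(\partial_J^4W\), \(\chi_{\sigma JJ}\), \(B\), \(G=3V_0/(4\pi B)\), and \(\gamma=G\Delta^2\)), I will use the uniform gap \eqref{eq:g0} and the Kato analyticity remark after Proposition~\ref{prop:ucp-gap}. The key step is to fix a closed disc \(D\) in the complex \(s\)-plane, centered at \(1\), on which the ground state stays isolated by at least \(g_0/2\) for all \(a\in I_a\) and small \(|\sigma|,|J|\). Every \(s_{N,r}\) with \(N\ge12\) must lie in \(D\); this holds because \(h^2/12\le \pi^2/432\) is small. On \(D\) each observable is a finite combination of matrix elements of the Riesz projector and the reduced resolvent. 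It is therefore holomorphic and bounded, and Cauchy's estimate gives a Lipschitz constant \(L\) on the real segment. The maps \(B\mapsto G\) and \((G,\Delta)\mapsto\gamma\) are smooth near the limits because \(B_*>0\) and \(\Delta_*>0\), so they inherit the Lipschitz property. For \(\chi_{\sigma JJ}\), the \(\sigma\)-derivative should be taken analytically, through joint analyticity in \((\sigma,s)\), and not through the Richardson scheme. For the trajectory, Proposition~\ref{prop:backreaction-continuity} directly gives \(C_T|s-1|\). Taking the maximum of these finitely many constants gives a single constant \(C\) with all normalized errors at most \(C|s_{N,r}-1|\).

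Proposition~\ref{prop:symbols} then gives \(|s_{N,\rm std}-1|\le (2\pi)^2/(12N^2)\) and \(|s_{N,\rm imp}-1|\le(2\pi)^4/(90N^4)\). This yields \eqref{eq:error-std} with \(C_2=C(2\pi)^2/12\), and the first term of \eqref{eq:error-imp} with \(C_4=C(2\pi)^4/90\). The Ward closure contributes nothing, as already stated. Finally, the low-frequency kernel remainder is handled separately by Proposition~\ref{prop:kernel}. It contributes \(1/(N^2g_0^2)\) for both regulators; for the standard one it is absorbed into \(C_2\), while for the improved one it is listed as the separate second term. Adding the two contributions gives \eqref{eq:error-imp}. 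Since both right-hand sides tend to zero, every observable's limit is its value at \(H(s=1)\).

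I expect the main obstacle to be the uniformity of the analyticity domain across \(a\in I_a\) and the nearby \(\sigma\) values needed for \(\chi_{\sigma JJ}\) and \(B\). Equation \eqref{eq:g0} is stated only on the sequence and at \(a=1\). I will first try to extend it by combining Weyl's inequality in \(s\) with continuity in \(a\) and compactness of \(I_a\). This should give a uniform separation of at least \(g_0/2\) on a complex neighborhood of \(I_a\times[s_{12,\rm std},1]\). Bounded resolvents on a contour of radius \(g_0/4\) around \(E_0\) then give the uniform Cauchy bounds.
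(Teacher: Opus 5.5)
Your proposal is correct and follows the same chain as the paper's proof, with the Cauchy-estimate and normalization details made explicit: Proposition~\ref{prop:symbols} for $|s_{N,r}-1|$, Proposition~\ref{prop:ucp-gap} for the Hamiltonian, channel and gap, uniform finite-dimensional Kato analyticity for the spectral responses, Proposition~\ref{prop:backreaction-continuity} for the trajectory, and Proposition~\ref{prop:kernel} for the separate $1/(N^2g_0^2)$ term. The uniform gap on $I_a$ that you flag as the main obstacle does not need to be derived, and continuity plus compactness alone would not yield a quantitative bound like $g_0/2$; it is part of the declared compact domain and is assumed explicitly in Proposition~\ref{prop:backreaction-continuity}.
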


\begin{proof}
Proposition~\ref{prop:symbols} controls \(|s_N-1|\).
Proposition~\ref{prop:ucp-gap} controls the Hamiltonian, channel and gap.
Uniform finite-dimensional analyticity controls all fixed-order spectral
responses.  Proposition~\ref{prop:backreaction-continuity} controls the exact
ODE trajectory.  Proposition~\ref{prop:kernel} adds the displayed
\(N^{-2}\) low-frequency remainder.
\end{proof}

The numerical aggregate scales as
\begin{equation}
 \eps_N^{\rm agg}\simeq3.22788N^{-1.99513}
 \label{eq:diagnostic-fit}
\end{equation}
over the stored sequence.  Equation~\eqref{eq:diagnostic-fit} is a numerical
diagnostic; the convergence rate follows analytically from
Corollary~\ref{cor:error}.

\begin{figure}[t]
 \centering
 \includegraphics[width=\textwidth]{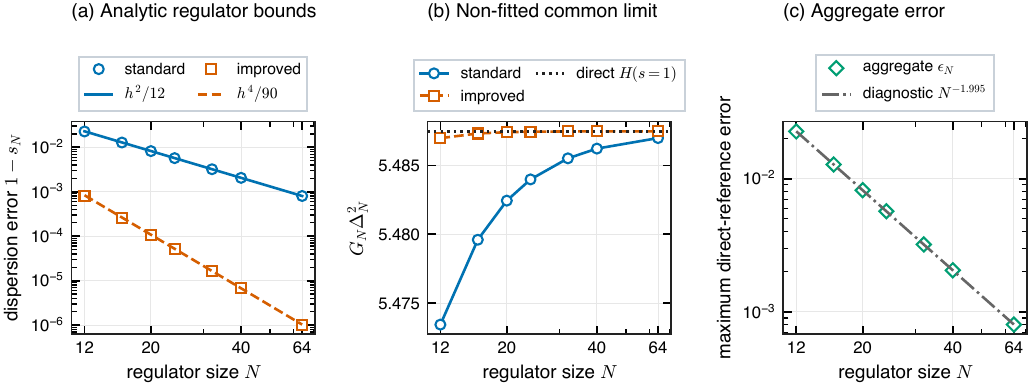}
 \caption{Direct-reference evidence.  (a) Standard and improved dispersion
 errors with their analytic \(h^2/12\) and \(h^4/90\) envelopes.  (b) The
 dimensionless number \(G_N\Delta_N^2\) approaches the direct
 \(H(s=1)\) value shown by the horizontal line.  (c) The aggregate error
 \(\eps_N\) and the diagnostic \(N^{-1.99513}\) curve.  Markers and dash
 patterns duplicate color.}
 \label{fig:evidence}
\end{figure}

\section{Numerical evaluation and consistency controls}
\label{sec:numerics}

The numerical evidence uses
\(N=(12,16,20,24,32,40,64)\) and one direct continuum row.  Figure
\ref{fig:evidence} displays the decisive comparisons.  Table
\ref{tab:values} lists quantities evaluated at \(H(s=1)\).

\begin{table}[ht]
\centering
\caption{Direct fixed-band outputs.  Digits are provided for exact
reproduction of the repository calculation, not as an uncertainty claim.}
\label{tab:values}
\small
\begin{tabular}{lS[table-format=-2.15]}
\toprule
Quantity & {Direct value}\\
\midrule
Ground energy \(E_0\) & 0.809265076245247\\
Mass gap \(\Delta_*\) & 0.486339567246666\\
Connected \(\partial_J^4W_*\) & -3.325758784273201\\
Mixed response \(W_{\sigma JJ,*}\) & 13.102173996407302\\
Geometry coefficient \(B_*\) & 2.552453008608199\\
Newton response \(G_*\) & 23.200280752211107\\
\(\gamma_*=G_*\Delta_*^2\) & 5.487473657582965\\
\(a_*(0.25)\) & 1.150551455625258\\
\bottomrule
\end{tabular}
\end{table}

Three controls separate physical dependence from numerical coincidence:
\begin{itemize}
\item setting \(\lambda=0\) changes the connected fourth response by
\(1.099036701445704\);
\item setting \(G=0\) freezes the scale factor at one;
\item a centered finite-difference estimate of \(\partial_J^4W\) converges
monotonically at second order to the independent spectral-recursion value.
\end{itemize}
The largest unitary completeness residual is
\(4.50\times10^{-16}\).  The maximum stored continuity residual is
\(1.31\times10^{-18}\), and the stored semiclassical Einstein residual is
\(5.56\times10^{-17}\).  These residuals validate the implementation within
floating-point precision.

\section{Scope and open extensions}
\label{sec:scope}

On the declared domain, the construction establishes analytic regulator
envelopes, a common direct operator for two inequivalent discretizations, and a
Newton normalization fixed by the metric-source Jacobian and the covariant
FLRW reduction.  The spectral combination \(\gamma_*=G_*\Delta_*^2\) is
therefore fixed by the model, although it has not yet been mapped to an
empirical observable.

The two-mode band, oscillator cutoff, and semiclassical gravitational sector
remain part of the setup.  Extending the construction requires an interacting
Lorentzian local net after cutoff removal, a dynamical derivation of the
geometric source sector, and a quantum metric with anomaly-free constraints.
Topology change and singular sectors lie beyond the present domain.

\section{Conclusion}

We have exhibited one finite-Hilbert interacting family for which unitary CP
dynamics, nonlinear matter response, mixed matter--geometry susceptibility, a
positive geometric kernel, and semiclassical backreaction share a common
fixed-band regulator limit.  The second- and fourth-order symbol bounds
propagate through the spectral observables, while the direct operator
\(H(s=1)\) fixes the limiting values without continuum fitting.  Within the
declared FLRW sector, the same source functional also fixes the Newton
normalization.

The result supplies a controlled benchmark for how gravitational response can
retain microscopic provenance across several interfaces.  Its extension to an
all-band interacting QFT with a dynamical quantum metric is a separate
constructive problem.

\section*{Code and data availability}

The complete reproducibility archive for this paper is available in the
\href{https://github.com/Amordia/rqcp-toward-quantum-gravity}
{public GitHub repository} and is permanently archived under DOI
\href{https://doi.org/10.5281/zenodo.22073562}
{10.5281/zenodo.22073562}.  This concept DOI resolves to the latest archived
version.  The archive contains the manuscript sources, figures, deterministic
numerical generators, machine-readable evidence, and validation scripts.

\bibliographystyle{unsrturl}
\bibliography{references}
\end{document}